\DeclareMathOperator{\maxid}{\it max}
\begin{document}

\title{Notes on discrepancy in the pairwise comparisons method }

\author{Konrad Ku\l{}akowski}

\institute{AGH University of Science and Technology, al. Mickiewicza 30, Kraków,
Poland}
\maketitle
\begin{abstract}
The pairwise comparisons method is a convenient tool used when the
relative order among different concepts (alternatives) needs to be
determined. One popular implementation of the method is based on solving
an eigenvalue problem for the pairwise comparisons matrix. In such
cases the ranking result the principal eigenvector of the pairwise
comparison matrix is adopted, whilst the eigenvalue is used to determine
the index of inconsistency. A lot of research has been devoted to
the critical analysis of the eigenvalue based approach. One of them
is the work of Bana e Costa and Vansninck \cite{BanaeCosta2008acao}.
In their work authors define the conditions of order preservation
(COP) and show that even for a sufficiently consistent pairwise comparisons
matrices, this condition can not be met. The present work defines
a more precise criteria for determining when the COP is met. To formulate
the criteria a discrepancy factor is used describing how far the input
to the ranking procedure is from the ranking result.
\end{abstract}

\section{Introduction }

The origins of pairwise comparisons (herein abbreviated as\emph{ }PC)
date back to the thirteenth century \cite{Colomer2011rlfa}. The contemporary
form of the method owes to \emph{Fechner} \cite{Fechner1966eop},
\emph{Thurstone} \cite{Thurstone27aloc} and Saaty \cite{Saaty1977asmf}.
The latter proposed the Analytic Hierarchy Process \emph{(AHP) - }a
hierarchical, eigenvalue based extension to the \emph{PC} theory,
which provides useful methods for dealing with a large number of criteria.
In its early stages the \emph{PC} method was a voting method \cite{Colomer2011rlfa}.
Later it was used in psychometrics and psychophysics \cite{Thurstone27aloc}.
Over time, it began to be used in decision theory \cite{Saaty2005taha},
economics \cite{Peterson1998evabt}, and other\\ fields. The utility
of the method has been confirmed by numerous examples \cite{Vaidya2006ahpa,Liberatore2008tahp,Subramanian2012aroa}.
Despite its long existence it is still an interesting subject for
resear\-chers. Some of its aspects still raise vigorous discussions
\cite{Dyer1990rota,Barzilai1994arrn,BanaeCosta2008acao} and prompt
researchers to enquire further into this area. Example of such exploration
are the \emph{Rough Set} approach \cite{Greco2011fkSIMPL}, fuzzy
\emph{PC} relation handling \cite{Mikhailov2003dpff,Fedrizzi2010otpv,Yuen2013fcnp},
incomplete \emph{PC} relation \cite{Bozoki2010ooco,Koczkodaj1999mnei},
non-numerical rankings \cite{Janicki2012oapc}, rankings with the
reference set of alternatives \cite{Kulakowski2013ahre,Kulakowski2013hreaCoRR}
and others. A more thorough discussion of the \emph{PC} method can
be found in \cite{Smith2004aada,Ishizaka2009ahpa}.

\section{Preliminaries }

\subsection{Pairwise comparisons method}

Central to the \emph{PC} method is a \emph{PC} matrix $M=[m_{ij}]$,
where $m_{ij}\in\mathbb{R}_{+}$ and $i,j\in\{1,\ldots,n\}$, that
expresses a quantitative relation $R$ over the finite set of concepts
$C\overset{\textit{df}}{=}\{c_{i}\in\mathscr{C}\wedge i\in\{1,\ldots,n\}\}$.
The set $\mathscr{C}$ is a non empty universe of concepts and $R(c_{i},c_{j})=m_{ij}$,
$R(c_{j},c_{i})=m_{ji}$. The values $m_{ij}$ and $m_{ji}$ are interpreted
as the relative importance, value or quality indicators of concepts
$c_{i}$ and $c_{j}$, so that according to the best knowledge of
experts $c_{i}=m_{ij}c_{j}$ \\ should hold. 
\begin{definition}
\label{def:A-matrix-recip}A matrix $M$ is said to be reciprocal
if $\forall i,j\in\{1,\ldots,n\}:m_{ij}=\frac{1}{m_{ji}}$ and $M$
is said to be consistent if $\forall i,j,k\in\{1,\ldots,n\}:m_{ij}\cdot m_{jk}\cdot m_{ki}=1$.
\end{definition}
Since the knowledge stored in the \emph{PC} matrix usually comes from
experts in the field of $R$, it may results in inaccuracy. In such
a case it may be that there exists a certain triad of values $m_{ij},m_{jk},m_{ki}$
from $M$ for which $m_{ij}\cdot m_{jk}\cdot m_{ki}\neq1$. In other
words, different ways of estimating concept value may lead to different
results. This observation gave rise to the concept of an inconsistency
index describing how far the matrix $M$ is inconsistent. There are
a number of inconsistency indexes \cite{Bozoki2008osak}. The most
popular, proposed by Saaty \cite{Saaty1977asmf}, is defined below.
\begin{definition}
\label{def:SaatyIncIdx}The eigenvalue based consistency index \emph{(Saaty's
Index)} of $n\times n$ reciprocal matrix $M$ is equal to: 
\begin{equation}
\mathcal{S}(M)=\frac{\lambda_{\textit{max}}-n}{n-1}\label{eq:Consistency_Index_AHP}
\end{equation}

where $\lambda_{\textit{max}}$ is the principal eigenvalue of $M$.
\end{definition}

The result of the pairwise comparisons method is ranking - a function
that assigns values to the concepts. Formally, it can be defined as
follows. 
\begin{definition}
The ranking function for $C$ (the ranking of $C$) is a function
$\mu:C\rightarrow\mathbb{R}_{+}$ that assigns to every concept from
$C\subset\mathscr{C}$ a positive value from $\mathbb{R}_{+}$. 
\end{definition}
In other words, $\mu(c)$ represents the ranking value for $c\in C$.
The $\mu$ function is usually written in the form of a vector of
weights $\mu\overset{\textit{df}}{=}\left[\mu(c_{1}),\ldots\mu(c_{n})\right]^{T}$.
One of the popular methods of obtaining the vector $\mu$ is to calculate
the principal eigenvector $\mu_{max}$ of $M$ (i.e. the vector associated
with the principal eigenvalue of $M$) and rescale them so that the
sum of its elements is $1$, i.e. 

\begin{align}
\mu_{\textit{ev}} & =\left[\frac{\mu_{\textit{max}}(c_{1})}{s_{\textit{ev}}},\ldots,\frac{\mu_{\textit{\textit{max}}}(c_{n})}{s_{\textit{ev}}}\right]^{T},\,\,\,\,\,\mbox{and}\nonumber \\
s_{\textit{ev}} & =\underset{i=1}{\overset{n}{\sum}}\mu_{\textit{max}}(c_{i})\label{eq:eigen-value-approach}
\end{align}

where $\mu_{\textit{ev}}$ - the ranking function, $\mu_{\textit{max}}$
- the principal eigenvector of $M$. Due to the \emph{Perron-Frobenius}
theorem \cite{Saaty1977asmf} one exists, because a real square matrix
with the positive entries has a unique largest real eigenvalue such
that the associated eigenvector has strictly positive components.

\subsection{Eigenvalue heuristics}

According to the \emph{PC} approach $m_{il}$ (an entry of $M$) should
express the relative value of $c_{i}\in C$ with respect to $c_{l}\in C$.
Therefore one would expect that $\nicefrac{\mu(c_{i})}{\mu(c_{l})}=m_{il}$,
i.e. $\mu(c_{i})=m_{il}\mu(c_{l})$ or conversely $m_{li}\mu(c_{i})=\mu(c_{l})$.
In particular, it is desirable that 
\begin{equation}
m_{li}\mu(c_{i})=\mu(c_{l})=m_{lj}\mu(c_{j})\label{eq:ideal-case}
\end{equation}
 for every $c_{i},c_{j},c_{l}\in C$. Unfortunately due to possible
data inconsistency this may not be possible, i.e. it may be the case
that $m_{li}\mu(c_{i})\neq m_{lj}\mu(c_{j})$. Therefore the question
arises of what $\mu(c_{l})$ should be? Since the values $m_{li}\mu(c_{i})$
for $i=1,\ldots,n$ can vary from each other the natural (and probably
one of the most straightforward) proposal is to adopt its arithmetic
mean as the desired candidate for $\mu(c_{l})$. This leads to the
equation: 
\begin{equation}
m_{l1}\mu(c_{1})+\ldots+m_{ln}\mu(c_{n})=n\cdot\mu(c_{l})\label{eq:first-eq}
\end{equation}

which expresses the wish that $\mu(c_{l})$ should be a compromise
between all its putative values. A natural question is whether it
is possible to achieve such a compromise for every $l=1,\ldots,n$.
In other words, whether it is possible to solve the following equation
system:
\begin{equation}
\begin{array}{ccc}
m_{11}\mu(c_{1})+\ldots+m_{1n}\mu(c_{n}) & = & n\cdot\mu(c_{1})\\
\hdotsfor[1]{3}\\
m_{n1}\mu(c_{1})+\ldots+m_{nn}\mu(c_{n}) & = & n\cdot\mu(c_{n})
\end{array}\label{eq:linear_system}
\end{equation}

This leads to the question of the solution of the following matrix
equation: 
\begin{equation}
M\mu=n\mu\label{eq:matrix-exact-eigen-value-equation}
\end{equation}

 Of course the solution of the above equation is the eigenvector
of $M$, whilst $n$ is replaced by $\lambda$ - $M$'s eigenvalue. 

\begin{equation}
M\mu=\lambda\mu\label{eq:matrix-exact-eigen-value-equation-1}
\end{equation}

There might be many eigenvectors and eigenvalues of $M$. However,
when $M$ is positive, real and reciprocal it has at least one positive
real eigenvalue associated with the positive and real eigenvector
\cite{Saaty1977asmf}. Let $\lambda_{\textit{max}}$ be the real,
largest, positive eigenvalue of $M$ and $\mu_{\textit{max}}$ be
the associated eigenvector. \emph{AHP} adopts $\mu_{\textit{max}}$
as the solution of (\ref{eq:matrix-exact-eigen-value-equation-1}).

\subsection{Local discrepancy}

In his seminal work \cite[p. 238]{Saaty1977asmf} Saaty proved the
following equality:
\begin{equation}
\lambda_{\textit{max}}-1=\sum_{i=1,i\neq j}^{n}m_{ji}\frac{\mu(c_{i})}{\mu(c_{j})}\label{eq:saaty_eq}
\end{equation}

Thus, 
\begin{equation}
\lambda_{\textit{max}}-n=\left(\sum_{i=1,i\neq j}^{n}m_{ji}\frac{\mu(c_{i})}{\mu(c_{j})}\right)-(n-1)\label{eq:saaty_eq_2}
\end{equation}

which leads to the equation describing the Saaty's inconsistency index
(Def. \ref{def:SaatyIncIdx}):

\begin{equation}
\mathcal{S}(M)=\left(\frac{1}{n-1}\sum_{i=1,i\neq j}^{n}m_{ji}\frac{\mu(c_{i})}{\mu(c_{j})}\right)-1\label{eq:saaty_eq_3}
\end{equation}

Following Saaty \cite[p. 238]{Saaty1977asmf} let us denote:
\begin{equation}
\epsilon(i,j)\overset{\textit{df}}{=}m_{ji}\frac{\mu(c_{i})}{\mu(c_{j})}=\frac{1}{m_{ij}}\frac{\mu(c_{i})}{\mu(c_{j})}\label{eq:e_def}
\end{equation}

If the ranking $\mu_{\textit{max}}$ were ideal, i.e. if each expert
judgment perfectly corresponded to the ranking results, then every
$m_{ij}$ would equal the ratio $\nicefrac{\mu(c_{i})}{\mu(c_{j})}$.
In such a case every $\epsilon(i,j)$ would equal to $1$. Otherwise,
when the ranking is imperfect the values $m_{ij}$ and $\nicefrac{\mu(c_{i})}{\mu(c_{j})}$
may vary. In other words $\epsilon(i,j)$ describes the discrepancy%
\footnote{In \cite{Saaty1977asmf} the value $\epsilon(i,l)$ is referred as
error.%
} between the particular expert judgment $m_{ij}$ and the ranking
results $\nicefrac{\mu(c_{i})}{\mu(c_{j})}$. The relationship between
$\epsilon(i,j)$ and $\mathcal{S}(M)$ (adopting $\mu_{\textit{max}}$
- the eigenvector of $M$ as the ranking function) could be written
as follows: 
\begin{equation}
\mathcal{S}(M)=\frac{1}{\left(n-1\right)}\sum_{i=1,i\neq j}^{n}\left(\epsilon(i,j)-1\right)\label{eq:SM_proof_3-2}
\end{equation}

In other words the given value of the inconsistency index $\mathcal{S}(M)$
guarantees that the arithmetic mean of the difference between assessment
accuracy determinants and one, i.e. $\epsilon(i,j)-1$ equals $\mathcal{S}(M)$.

\subsection{Conditions of Order Preservation\label{sub:COP}}

In \cite{BanaeCosta2008acao} \emph{Bana e} \emph{Costa} and \emph{Vansnick}
formulate two postulates\emph{ }(conditions of order preservation)
as regards the meaning of an eigenvalue based ranking result. The
first one, ordinal, \emph{the preservation of order preference condition}
\emph{(POP)} claims that the ranking result in relation to the given
pair of concepts $(c_{i},c_{j})$ should not break with the expert
judgement. In other words for pair of concepts $c_{1},c_{2}\in C$
such that $c_{1}$ dominates $c_{2}$ i.e. $m_{1,2}>1$ should hold
that:

\begin{equation}
\mu(c_{1})>\mu(c_{2})\label{eq:6-cop-qualitative-cond-1-1}
\end{equation}

The second one, cardinal, \emph{the preservation of order of intensity
of preference condition }(\emph{POIP),} stipulates that if $c_{1}$
dominates $c_{2}$, more than $c_{3}$ dominates $c_{4}$ (for $c_{1},\ldots,c_{4}\in C$),
i.e. if additionally \\ $m_{3,4}>1$ and $m_{1,2}>m_{3,4}$ then
also

\begin{equation}
\frac{\mu(c_{1})}{\mu(c_{2})}>\frac{\mu(c_{3})}{\mu(c_{4})}\label{eq:8-eq:cop-quantitative-cond-1}
\end{equation}

Despite the fact that the both conditions of order preservation have
been formulated in the context of eigenvalue based approach it is
important to note that, in principle, they remain valid in the context
of any priority deriving method. None of the two conditions does not
require $\mu$ be a rescaled eigenvector of $M$. Moreover, meeting
the \emph{POP} and \emph{POIP} conditions seem to be natural for any
$\mu$.

\section{The ranking discrepancy\label{sec:The-ranking-discrepancy}}

It is easy to see that $\epsilon(i,j)=\nicefrac{1}{\epsilon(j,i)}$.
For example if some $\epsilon(i,j)=2$ then $\epsilon(j,i)=0.5$.
In fact both of these values carry the same information, which is:
the ranking result for the pair $(c_{i},c_{j})$ differs twice from
the expert judgement. I.e. one concept got $100\%$ better score than
they should. The usefulness of the $\epsilon(i,j)$ parameter has
been recognized by \emph{Saaty}. For instance in \cite[p. 203]{Saaty2013dk}
the matrix $[\epsilon(i,j)]$ is used to determine which expert judgments
need to be improved in order to reduce inconsistency of $M$.

\subsection{Local discrepancy}

It turns out that $\epsilon(i,j)$ can also be used to formulate sufficient
conditions for which both \emph{COP }postulates (Def. \ref{sub:COP})
hold. For this purpose, let us define the local discrepancy $\mathcal{E}(i,j)$
as: 
\begin{equation}
\mathcal{E}(i,j)\overset{\textit{df}}{=}\max\{\epsilon(i,j)-1,\nicefrac{1}{\epsilon(i,j)}-1\}\label{eq:local_error}
\end{equation}
The value $\mathcal{E}(i,j)$ reflects local differences between ranking
results and given expert judgements. Information that for certain
$\widehat{i},\widehat{j}$ the value $\mathcal{E}(\widehat{i},\widehat{j})=0.8$
means that the discrepancy between the expert judgment $m_{\widehat{i}\widehat{j}}$
and the ranking results $\mu(c_{\widehat{i}})$ and $\mu(c_{\widehat{j}})$
reach $80\%$. Similarly as the matrix $[\epsilon(i,j)]$, also the
local discrepancy matrix $[\mathcal{E}(i,j)]$ may help to discover
where the highest discrepancy is, hence, where the expert judgement
(or the ranking function) could be improved.

\subsection{Global discrepancy}

In order to reduce (to limit) the local discrepancies it is reasonable
to introduce the concept of the global ranking discrepancy.
\begin{definition}
\label{def:local_inconsistency_def}Let the global ranking discrepancy
for the pairwise comparisons matrix $M$, and the ranking $\mu$,
be the maximal value of $\mathcal{E}(i,j)$ for $i,j=1,\ldots,n$,
i.e.

\begin{equation}
\mathcal{D}(M,\mu)\overset{\textit{df}}{=}\max_{i,j=1,\ldots,n}\mathcal{E}(i,j)\label{eq:s_loc_def}
\end{equation}

\end{definition}
Thus, a certain value of the global ranking discrepancy $\mathcal{D}(M,\mu)\leq\delta$
provides a guarantee that the maximal discrepancy between a single
assessment of an expert and the comparison of corresponding results
will not be greater than $\delta$. The ranking discrepancy $\mathcal{D}(M,\mu)$
translates directly into the inconsistency $\mathcal{S}(M)$. The
relationship can be expressed as the following theorem.
\begin{theorem}
\label{For-every-pairwise}For every pairwise comparisons matrix $M$
and the eigenvector based ranking $\mu_{\textit{max}}$ holds that:

\begin{equation}
\mathcal{D}(M,\mu_{\textit{max}})\leq\delta\Rightarrow\mathcal{S}(M)\leq\delta\label{eq:local_inc_theor}
\end{equation}
\end{theorem}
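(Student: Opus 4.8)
The plan is to reduce the claim to the identity \eqref{eq:SM_proof_3-2}, already established for the eigenvector based ranking, and then bound its summands term by term. First I would fix an arbitrary index $j\in\{1,\ldots,n\}$ and recall that, by Saaty's equality \eqref{eq:saaty_eq} together with Definition~\ref{def:SaatyIncIdx},
\begin{equation*}
\mathcal{S}(M)=\frac{1}{n-1}\sum_{i=1,i\neq j}^{n}\bigl(\epsilon(i,j)-1\bigr),
\end{equation*}
and this holds independently of the particular $j$ chosen, since its left-hand side does not depend on $j$ (so one may equally well average the right-hand side over all $j$).

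Next I would observe that for every pair $i,j$ the quantity $\epsilon(i,j)-1$ is dominated by the local discrepancy. Indeed, straight from \eqref{eq:local_error},
\begin{equation*}
\epsilon(i,j)-1\leq\max\{\epsilon(i,j)-1,\ \nicefrac{1}{\epsilon(i,j)}-1\}=\mathcal{E}(i,j),
\end{equation*}
which is valid regardless of whether $\epsilon(i,j)\geq1$ or $\epsilon(i,j)<1$: in the latter case the right-hand side is already nonnegative while the left-hand side is negative. Combining this with Definition~\ref{def:local_inconsistency_def} yields $\epsilon(i,j)-1\leq\mathcal{E}(i,j)\leq\mathcal{D}(M,\mu_{\textit{max}})$ for all $i,j$.

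Finally, under the hypothesis $\mathcal{D}(M,\mu_{\textit{max}})\leq\delta$, I would substitute the per-term bound $\epsilon(i,j)-1\leq\delta$ into the identity above and sum over the $n-1$ indices $i\neq j$, getting $\mathcal{S}(M)\leq\frac{1}{n-1}(n-1)\delta=\delta$, which is exactly \eqref{eq:local_inc_theor}. I do not anticipate a genuine obstacle: the argument rests entirely on the previously proved linear identity for $\mathcal{S}(M)$ and on the trivial inequality $x-1\leq\max\{x-1,\nicefrac{1}{x}-1\}$. The only point needing a line of care is this last inequality, with the accompanying remark that pairs with $\epsilon(i,j)<1$ merely make the bound slacker, so no sign bookkeeping or cancellation is involved.
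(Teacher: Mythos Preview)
Your proposal is correct and follows essentially the same route as the paper: bound each term $\epsilon(i,j)-1$ by $\mathcal{E}(i,j)\leq\mathcal{D}(M,\mu_{\textit{max}})\leq\delta$ and plug this into the identity \eqref{eq:SM_proof_3-2}. The only difference is cosmetic---you make the inequality $\epsilon(i,j)-1\leq\mathcal{E}(i,j)$ explicit and comment on the case $\epsilon(i,j)<1$, whereas the paper simply cites the definition of $\mathcal{E}$.
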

\begin{proof}
Since $\mathcal{D}(M,\mu)\leq\delta$, thus according to the definition
\ref{def:local_inconsistency_def}, every $\mathcal{E}(i,j)\leq\delta$
for $i,j=1,\ldots,n$. Thus, due to definition of $\mathcal{E}$ (see
\ref{eq:local_error}), holds that $\epsilon(i,j)-1\leq\delta$ for
every $i,j\in\{1,\ldots,n\}$. In particular for any $j\in\{1,\ldots,n\}$
it is true that:
\begin{equation}
\sum_{i=1,i\neq j}^{n}\left(\epsilon(i,j)-1\right)\leq(n-1)\delta\label{eq:local_inc_ineq}
\end{equation}

hence
\begin{equation}
\frac{1}{\left(n-1\right)}\sum_{i=1,i\neq l}^{n}\left(\epsilon(i,j)-1\right)\leq\delta\label{eq:local_inc_ineq2}
\end{equation}

which, in the light of (\ref{eq:SM_proof_3-2}) satisfies the assertion
$\mathcal{S}(M)\leq\delta$. 
\end{proof}
Hence, besides the fact that the global ranking discrepancy $\mathcal{D}(M,\mu_{\maxid})$
detects and limits the worst case discrepancy between a single expert
judgement and the ranking result, it also provides a guarantee in
the original sense proposed by \emph{Saaty} \cite{Saaty2005taha}.
Therefore, wherever the inconsistency index $\mathcal{S}(M)$ has
so far been used, $\mathcal{D}(M,\mu_{\maxid})$ might be used instead.
Provided of course, that $\mathcal{D}(M,\mu_{\maxid})$ is sufficiently
small. In return, in addition to the requirements of the level of
inconsistency, the users receive a guarantee of even discrepancy distribution.

\section{The ranking discrepancy and the conditions of order preservation}

Similarly as \emph{POP} and \emph{POIP} (Sec. \ref{sub:COP}), the
global ranking discrepancy is derived from eigenvalue based approach
but it does not depend on it. Thus, the definition (Def. \ref{def:local_inconsistency_def})
remains valid for any priority deriving method and any $\mu$. Moreover,
the value $\mathcal{D}(M,\mu)$ remains in the immediate connection
with \emph{POP} and \emph{POIP}. This relationship  could be expressed
in the form of the following two assertions. 
\begin{theorem}
\label{COP-theor}For every pairwise comparisons matrix $M$ expressing
the quantitative relationships $R$ between concepts $c_{1},\ldots,c_{n}\in C$,
and the ranking $\mu$, the order preference condition is preserved
i.e. 
\begin{equation}
m_{ij}>1\,\,\,\,\,\mbox{implies}\,\,\,\,\,\mu(c_{i})>\mu(c_{j})\label{eq:cop-proof_2}
\end{equation}

if wherever $\mathcal{D}(M,\mu)<\delta$ then also $m_{ij}\geq\delta+1$.\end{theorem}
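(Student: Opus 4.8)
The plan is to derive \eqref{eq:cop-proof_2} directly from the definition of the local discrepancy and from the identity \eqref{eq:e_def}, with no appeal to the eigenvalue structure of $M$. First I would fix an arbitrary pair of indices $i,j$ with $m_{ij}>1$; the hypothesis then supplies $m_{ij}\geq\delta+1$. Since every ranking takes values in $\mathbb{R}_+$, proving $\mu(c_i)>\mu(c_j)$ is the same as proving $\mu(c_i)/\mu(c_j)>1$, so this ratio is the quantity to target.

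The key step is to rewrite \eqref{eq:e_def} in the form $\mu(c_i)/\mu(c_j)=m_{ij}\,\epsilon(i,j)$, which reduces the problem to bounding $\epsilon(i,j)$ from below. From the standing assumption $\mathcal{D}(M,\mu)<\delta$ together with Definition \ref{def:local_inconsistency_def} one gets $\mathcal{E}(i,j)<\delta$, and then \eqref{eq:local_error} yields in particular $1/\epsilon(i,j)-1<\delta$, i.e. $\epsilon(i,j)>1/(\delta+1)$. Multiplying the two estimates gives
\[
\frac{\mu(c_i)}{\mu(c_j)}=m_{ij}\,\epsilon(i,j)>(\delta+1)\cdot\frac{1}{\delta+1}=1,
\]
which is exactly \eqref{eq:cop-proof_2} for the chosen pair; since the pair was arbitrary, \emph{POP} follows.

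I do not anticipate a substantive obstacle, as the argument is essentially bookkeeping with the two definitions. The points that require care are: choosing the \emph{reciprocal} branch $1/\epsilon(i,j)-1$ of the maximum in \eqref{eq:local_error} (the other branch, $\epsilon(i,j)-1<\delta$, only bounds $\epsilon(i,j)$ from above and is useless here); checking that the inequalities can all be taken strict, which is automatic because $\mathcal{D}(M,\mu)<\delta$ is strict and $m_{ij}\geq\delta+1>0$; and reading the compressed hypothesis correctly, namely that one works under $\mathcal{D}(M,\mu)<\delta$ together with the requirement that every entry of $M$ exceeding $1$ is in fact at least $\delta+1$.
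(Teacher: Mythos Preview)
Your argument is correct and is essentially the paper's own proof: the paper bounds $\epsilon(j,i)-1<\delta$ from $\mathcal{E}(j,i)<\delta$, while you bound $1/\epsilon(i,j)-1<\delta$ from $\mathcal{E}(i,j)<\delta$, and since $\epsilon(j,i)=1/\epsilon(i,j)$ these are literally the same inequality, leading in both cases to $\mu(c_i)/\mu(c_j)>m_{ij}/(\delta+1)\geq 1$.
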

\begin{proof}
Since $\mathcal{D}(M,\mu)<\delta$, then according to the definition
\ref{def:local_inconsistency_def}, every $\mathcal{E}(\widehat{i},\widehat{j})<\delta$
for $\widehat{i},\widehat{j}=1,\ldots,n$. In particular $\mathcal{E}(j,i)<\delta$,
hence also $\epsilon(j,i)-1<\delta$. Therefore, due to the definition
of $\epsilon$ (\ref{eq:e_def}) it is true that 
\begin{equation}
\frac{1}{m_{ji}}\cdot\frac{\mu(c_{j})}{\mu(c_{i})}<\delta+1\label{eq:proof_2-1}
\end{equation}

hence 
\begin{equation}
m_{ji}\frac{\mu(c_{i})}{\mu(c_{j})}>\frac{1}{\delta+1}\label{eq:proof_2-2}
\end{equation}

and due to the reciprocity 
\begin{equation}
\frac{\mu(c_{i})}{\mu(c_{j})}>\frac{m_{ij}}{\delta+1}\label{eq:proof_2-3}
\end{equation}

Therefore the ratio $\nicefrac{\mu(c_{i})}{\mu(c_{j})}$ is strictly
greater than one if only $\nicefrac{m_{ij}}{\delta+1}\geq1$. In other
words the only requirement in addition to $\mathcal{D}(M,\mu)<\delta$
needed to meet the \emph{POP} is $m_{ij}\geq\delta+1$. 
\end{proof}
The above theorem easily translates into an algorithm that allows
us to decide whether the pairwise comparison matrix $M$ and the ranking
$\mu$ are \emph{POP}-safe, i.e. whether the \emph{POP} condition
will never be violated for this pair. Let us note that if we adopt
a weak inequality as the upper bound of the ranking discrepancy index
i.e. $\mathcal{D}(M,\mu)\leq\delta$, then to meet the \emph{POP}
the strong inequality $m_{ij}>\delta+1$ is needed. Thus, assuming
that $\delta=\mathcal{D}(M,\mu)$ is known, all the ratios greater
than one i.e. $m_{ij}>1$ need to be examined to determine whether
they are also greater than $\delta+1$. If so, $M$ is \emph{POP}-safe,
which means that \emph{POP} is not violated. 

The relationship between \emph{POIP} and $\mathcal{D}(M,\mu)$ also
can be expressed in the form of assertion. 
\begin{theorem}
\label{COIP-theor}For every pairwise comparisons matrix $M$ expressing
the quantitative relationships $R$ between concepts $c_{1},\ldots,c_{n}\in C$,
and the ranking $\mu$, the order of intensity of preference condition
is preserved i.e. 
\begin{equation}
m_{ij}>m_{kl}>1\,\,\,\,\,\mbox{implies}\,\,\,\,\,\frac{\mu(c_{i})}{\mu(c_{j})}>\frac{\mu(c_{k})}{\mu(c_{l})}\label{eq:COIP_proof}
\end{equation}
if wherever $\mathcal{D}(M,\mu)<\delta$ then also $\nicefrac{m_{ij}}{m_{kl}}\geq\left(\delta+1\right)^{2}$\end{theorem}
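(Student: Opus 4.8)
The plan is to mimic the structure of the proof of Theorem~\ref{COP-theor}, but now applying the discrepancy bound twice --- once to the pair $(i,j)$ and once to the pair $(k,l)$ --- and then combining the two resulting inequalities. First I would observe that $\mathcal{D}(M,\mu)<\delta$ forces $\mathcal{E}(\widehat{i},\widehat{j})<\delta$ for \emph{all} index pairs, so in particular both $\mathcal{E}(j,i)<\delta$ and $\mathcal{E}(k,l)<\delta$ hold. From $\mathcal{E}(j,i)<\delta$, exactly as in the previous proof, I get $\epsilon(j,i)-1<\delta$, which by the definition of $\epsilon$ and reciprocity yields a lower bound
\begin{equation}
\frac{\mu(c_{i})}{\mu(c_{j})}>\frac{m_{ij}}{\delta+1}.
\end{equation}

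Next I would extract an \emph{upper} bound on $\nicefrac{\mu(c_{k})}{\mu(c_{l})}$. Here I use $\mathcal{E}(k,l)<\delta$, which gives $\epsilon(k,l)-1<\delta$, i.e. $m_{lk}\,\nicefrac{\mu(c_{k})}{\mu(c_{l})}<\delta+1$; using reciprocity $m_{lk}=1/m_{kl}$ this rearranges to
\begin{equation}
\frac{\mu(c_{k})}{\mu(c_{l})}<(\delta+1)\,m_{kl}.
\end{equation}
Combining the two displayed inequalities, it suffices to show $\dfrac{m_{ij}}{\delta+1}\geq(\delta+1)\,m_{kl}$, i.e. $\dfrac{m_{ij}}{m_{kl}}\geq(\delta+1)^{2}$, which is precisely the hypothesis; then
\begin{equation}
\frac{\mu(c_{i})}{\mu(c_{j})}>\frac{m_{ij}}{\delta+1}\geq(\delta+1)\,m_{kl}>\frac{\mu(c_{k})}{\mu(c_{l})},
\end{equation}
establishing \eqref{eq:COIP_proof}.

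The only genuinely delicate point --- the ``main obstacle'' --- is bookkeeping the directions of the inequalities so that one bound is a lower bound and the other an upper bound, and making sure the reciprocity step is applied to the correct entry ($m_{ji}$ versus $m_{ij}$, $m_{lk}$ versus $m_{kl}$); a sign slip there would collapse the argument. I would also note in passing that $m_{ij}>m_{kl}>1$ is not actually needed for the chain of inequalities itself --- it is the qualitative hypothesis of \emph{POIP} that the theorem is about --- but the quantitative sufficient condition $\nicefrac{m_{ij}}{m_{kl}}\geq(\delta+1)^{2}$ is what does the work, and under it the conclusion holds regardless. Finally, as with Theorem~\ref{COP-theor}, if one prefers the weak bound $\mathcal{D}(M,\mu)\leq\delta$ then the strict inequalities above become non-strict at the two application points, so one needs the strict hypothesis $\nicefrac{m_{ij}}{m_{kl}}>(\delta+1)^{2}$ to recover the strict conclusion; I would remark on this parallel to keep the two theorems consistent.
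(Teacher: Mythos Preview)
Your proposal is correct and follows essentially the same route as the paper: apply the discrepancy bound to the pairs $(j,i)$ and $(k,l)$ to obtain the lower bound $\mu(c_{i})/\mu(c_{j})>m_{ij}/(\delta+1)$ and the upper bound $\mu(c_{k})/\mu(c_{l})<m_{kl}(\delta+1)$, then combine them using the hypothesis $m_{ij}/m_{kl}\geq(\delta+1)^{2}$. Your closing remark on the weak-versus-strict variants also matches the paper's post-proof comment.
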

\begin{proof}
Since $\mathcal{D}(M,\mu)<\delta$, then according to the definition
\ref{def:local_inconsistency_def}, every $\mathcal{E}(p,q)<\delta$
for $p,q=1,\ldots,n$. In particular $\mathcal{E}(j,i)<\delta$ and
$\mathcal{E}(k,l)<\delta$, hence also $\epsilon(j,i)-1<\delta$ and
$\epsilon(k,l)-1<\delta$. Thus, following the same reasoning as in
Theorem \ref{COP-theor} (\ref{eq:proof_2-1}, \ref{eq:proof_2-2}
and \ref{eq:proof_2-3}) we obtain that 

\begin{equation}
\frac{\mu(c_{i})}{\mu(c_{j})}>\frac{m_{ij}}{\delta+1}\,\,\,\,\,\mbox{and}\,\,\,\,\,\frac{\mu(c_{l})}{\mu(c_{k})}>\frac{m_{lk}}{\delta+1}\label{eq:Q1}
\end{equation}

hence due to the reciprocity, 
\begin{equation}
\frac{\mu(c_{i})}{\mu(c_{j})}>\frac{m_{ij}}{\delta+1}\,\,\,\,\,\mbox{and}\,\,\,\,\,\frac{\mu(c_{k})}{\mu(c_{l})}<m_{kl}\left(\delta+1\right)\label{eq:Q2}
\end{equation}

Therefore, dividing the left inequality by the right inequality leads
to the formula 

\begin{equation}
\frac{\frac{\mu(c_{i})}{\mu(c_{j})}}{\frac{\mu(c_{k})}{\mu(c_{l})}}>\frac{\frac{m_{ij}}{\delta+1}}{m_{kl}\left(\delta+1\right)}\label{eq:Q3}
\end{equation}

Therefore, the ratio $\nicefrac{\mu(c_{i})/\mu(c_{j})}{\mu(c_{k})/\mu(c_{l})}$
is greater than $1$ if $\nicefrac{m_{ij}/\left(\delta+1\right)}{m_{kl}\left(\delta+1\right)}$
is not smaller than $1$. In other words the truth of the following
inequality: 

\begin{equation}
\frac{m_{ij}}{m_{kl}}\geq\left(\delta+1\right)^{2}\label{eq:Q4}
\end{equation}

implies that
\begin{equation}
\frac{\mu(c_{i})}{\mu(c_{j})}>\frac{\mu(c_{k})}{\mu(c_{l})}\label{eq:Q5}
\end{equation}

which is the desired assertion. 
\end{proof}
Similar as before, to hold the above theorem it is enough for the
weak inequality $\mathcal{D}(M,\mu)\leq\delta$ and the strong inequality
$\nicefrac{m_{ij}}{m_{kl}}>\left(\delta+1\right)^{2}$ to hold. Thus,
for the practical verification of whether the \emph{POIP} is violated,
the condition $\nicefrac{m_{ij}}{m_{kl}}>\left(\delta+1\right)^{2}$
needs to be examined for every pair $m_{ij},m_{kl}$ that meets the
requirements of the theorem.

\section{Numerical example}

Let us consider a case of numerical judgment described in \cite{BanaeCosta2008acao}.
There are four concepts $c_{1},\ldots,c_{4}$ for which the relative
importance determined by a person $J$ is given as the matrix $M$. 

\begin{equation}
M=\left(\begin{array}{cccc}
1 & 2.5 & 4 & 9.5\\
0.4 & 1 & 3 & 6.5\\
\frac{1}{4} & \frac{1}{3} & 1 & 5\\
\frac{1}{9.5} & \frac{1}{6.5} & \frac{1}{5} & 1
\end{array}\right)\label{eq:example_M}
\end{equation}
The rescaled eigenvector corresponding to the maximal eigenvalue of
$M$ is given as: 

\begin{equation}
\mu_{\textit{max}}=\left[0.533,\,0.287,\,0.139,\,0.041\right]^{T}\label{eq:example_mu}
\end{equation}
As already pointed in \cite{BanaeCosta2008acao} \emph{POIP} is not
satisfied. In particular $m_{3,4}>m_{1,3}$ but $\mu_{\textit{max}}(c_{3})/\mu_{\textit{max}}(c_{4})<$
\\ $\mu_{\textit{max}}(c_{1})/\mu_{\textit{max}}(c_{3})$. The local
discrepancy matrix $\mathcal{E}=[\mathcal{E}(i,j)]$ allows for identifying
the most inconsistent entry in $M$. It is $m_{3,4}$, for which $\mathcal{E}(3,4)=0.475$.

\begin{equation}
\mathcal{E}=\left(\begin{array}{cccc}
0 & 0.348 & 0.044 & 0.367\\
0.348 & 0 & 0.452 & 0.077\\
0.044 & 0.452 & 0 & 0.475\\
0.367 & 0.077 & 0.475 & 0
\end{array}\right)\label{eq:example_E_matrix}
\end{equation}
After re-evaluation by experts the value $m_{3,4}$ is set to $3$.
Re-creating the local discrepancy matrix for $M$ where $m_{3,4}=3$
and $m_{4,3}=\nicefrac{1}{3}$ indicates that $m_{1,2}$ also needs
expert attention. Re-evaluated $m_{1,2}$ is set to $1.5$ and due
to the reciprocity requirement $m_{2,1}$ is set to $\nicefrac{2}{3}$.
After adjusting four entries the matrix (\ref{eq:example_M}) takes
the form:

\begin{equation}
M^{'}=\left(\begin{array}{cccc}
1 & 1.5 & 4 & 9.5\\
\frac{1}{1.5} & 1 & 3 & 6.5\\
\frac{1}{4} & \frac{1}{3} & 1 & 3\\
\frac{1}{9.5} & \frac{1}{6.5} & \frac{1}{3} & 1
\end{array}\right)\label{eq:example_M_prim}
\end{equation}
The rescaled principal eigenvector of $M'$ is: 

\begin{equation}
\mu_{\textit{max}}^{'}=\left[0.487,\,0.338,0.126,\ 0.048\right]^{T}\label{eq:example_mu_prim}
\end{equation}
The local discrepancy matrix $\mathcal{E}'=[\mathcal{E}'(i,j)]$ calculated
for $M'$ and $\mu_{\textit{max}}^{'}$ shows that the global ranking
discrepancy is $0.149$. 

\begin{equation}
\mathcal{E}^{'}=\left(\begin{array}{cccc}
0 & 0.038 & 0.033 & 0.064\\
0.038 & 0 & 0.119 & 0.077\\
0.033 & 0.119 & 0 & 0.149\\
0.064 & 0.077 & 0.149 & 0
\end{array}\right)\label{eq:example_E_matrix_prim}
\end{equation}
According to the Theorem \ref{COP-theor} to meet the \emph{POP} condition
it is enough if 
\begin{equation}
m_{ij}^{'}>1\Rightarrow m_{ij}^{'}>1.149\label{eq:example_COP1}
\end{equation}
 for every $i,j=1,\ldots,4$ and $M'=[m_{ij}^{'}]$. Similarly, (Theorem
\ref{COIP-theor}) the \emph{POIP} condition is satisfied if 
\begin{equation}
m_{ij}^{'}>m_{kl}^{'}>1\Rightarrow m_{ij}^{'}/m_{kl}^{'}>(1+0.149)^{2}\approx1.32\label{eq:example_COP2}
\end{equation}
for every $i,j,k,l=1,\ldots,4$. It is easy to see that both (\ref{eq:example_COP1})
and (\ref{eq:example_COP2}) hold. Therefore, after the discrepancy
reduction%
\footnote{and inconsistency reduction. Note that $\mathscr{S}(M)=0.04$ whilst
$\mathscr{S}(M')=0.003$.%
}, there is a guarantee that the resulting pairwise comparisons matrix
$M'$ together with the ranking $\mu_{\textit{max}}^{'}$ satisfies
\emph{COP}.

\section{Discussion and summary}

In their work \emph{Bana e} \emph{Costa }and \emph{Vansnick} \cite{BanaeCosta2008acao}
formulated two conditions whose fulfillment makes the ranking result
indisputable. Therefore, in practice, meeting these two conditions
may translate into a significant reduction in the number of appeals
against the results of the ranking procedure. Hence, in addition to
intangible benefits such as providing the ranking participants a sense
of justice, meeting the \emph{POP} and \emph{POIP }conditions may
contribute to the reduction of costs associated with the carrying
out the evaluation procedure. The notion of global ranking discrepancy
$\mathcal{D}(M,\mu)$ helps to fulfill the \emph{Bana e} \emph{Costa
}and \emph{Vansninck} postulate\emph{.} The value $\mathcal{D}(M,\mu)$
directly translates to the requirements for the matrix $M$, so that
the smaller $\mathcal{D}(M,\mu)$ the greater the chance that the
\emph{POP} and \emph{POIP} conditions are met. 

Although the global ranking discrepancy (Sec. \ref{sec:The-ranking-discrepancy})
has been defined in the context of eigenvalue priority deriving method,
it is not tied to it. In fact it could be useful for any pair of the
\emph{PC} matrix $M$ and the ranking $\mu$. The only, but crucial,
assumption is that $\mu$ attempts to reflect the experts' judgments
given as $M$. The conditions provided by Theorems (\ref{COP-theor})
and (\ref{COIP-theor}) are sufficient, but they are not necessary.
Thus, there may exist better estimates allowing to determine whether
the \emph{COP} are satisfied. The existence of such estimates remains
as an open question. 

This study addresses an important problem of discrepancies between
expert judgments and ranking results that may appear in the pairwise
comparison method. The notion of the global ranking discrepancy has
been defined. Its relationship with the eigenvalue based inconsistency
index and \emph{POP} and \emph{POIP} \cite{BanaeCosta2008acao} postulates
have been shown. 
\begin{acknowledgements}
I would like to thank Dr Jacek Szybowski and Prof. Antoni Lig\k{e}za
for reading the first version of this work. Special thanks are due
to Dan Swain for his editorial help. The research is supported by
AGH University of Science and Technology, contract no.: 10.10.120.105.
\end{acknowledgements}
\bibliographystyle{plain}
\bibliography{papers_biblio_reviewed}

\begin{thebibliography}{10}

\bibitem{BanaeCosta2008acao}
C.~A. Bana~e Costa and J.~Vansnick.
\newblock {A critical analysis of the eigenvalue method used to derive
  priorities in AHP}.
\newblock {\em European Journal of Operational Research}, 187(3):1422--1428,
  June 2008.

\bibitem{Barzilai1994arrn}
J.~Barzilai and B.~Golany.
\newblock {AHP rank reversal, normalization and aggregation rules}.
\newblock {\em INFOR - Information Systems and Operational Research},
  32(2):57--64, 1994.

\bibitem{Bozoki2010ooco}
S.~Boz{\'o}ki, J.~F{\"u}l{\"o}p, and L.~R{\'o}nyai.
\newblock On optimal completion of incomplete pairwise comparison matrices.
\newblock {\em Mathematical and Computer Modelling}, 52(1--2):318 -- 333, 2010.

\bibitem{Bozoki2008osak}
S.~Boz{\'o}ki and T.~Rapcsak.
\newblock {On Saaty's and Koczkodaj's inconsistencies of pairwise comparison
  matrices}.
\newblock {\em Journal of Global Optimization}, 42(2):157--175, 2008.

\bibitem{Colomer2011rlfa}
J.~M. Colomer.
\newblock {Ramon Llull: from `Ars electionis' to social choice theory}.
\newblock {\em Social Choice and Welfare}, 40(2):317--328, October 2011.

\bibitem{Dyer1990rota}
J.~S. Dyer.
\newblock {Remarks on the analytic hierarchy process}.
\newblock {\em Management Science}, 36(3):249--258, 1990.

\bibitem{Fechner1966eop}
G.~T. Fechner.
\newblock {\em Elements of psychophysics}, volume~1.
\newblock Holt, Rinehart and Winston, New York, 1966.

\bibitem{Fedrizzi2010otpv}
M~Fedrizzi and M~Brunelli.
\newblock {On the priority vector associated with a reciprocal relation and a
  pairwise comparison matrix}.
\newblock {\em Journal of Soft Computing}, 14(6):639--645, 2010.

\bibitem{Greco2011fkSIMPL}
S.~Greco, B.~Matarazzo, and R.~S{\l}owi{\'n}ski.
\newblock Dominance-based rough set approach on pairwise comparison tables to
  decision involving multiple decision makers.
\newblock In {\em Rough Sets and Knowledge Technology}, volume 6954 of {\em
  LNCS}. Springer, 2011.

\bibitem{Ishizaka2009ahpa}
A.~Ishizaka and A.~Labib.
\newblock {Analytic hierarchy process and expert choice: Benefits and
  limitations}.
\newblock {\em OR Insight}, 22(4):201--220, 2009.

\bibitem{Janicki2012oapc}
R.~Janicki and Y.~Zhai.
\newblock On a pairwise comparison-based consistent non-numerical ranking.
\newblock {\em Logic Journal of the IGPL}, 20(4):667--676, 2012.

\bibitem{Koczkodaj1999mnei}
W.~W. Koczkodaj, M.~W. Herman, and M.~Orlowski.
\newblock {Managing Null Entries in Pairwise Comparisons}.
\newblock {\em Knowledge and Information Systems}, 1(1):119--125, 1999.

\bibitem{Kulakowski2013ahre}
K.~Ku{\l}akowski.
\newblock A heuristic rating estimation algorithm for the pairwise comparisons
  method.
\newblock {\em Central European Journal of Operations Research}, pages 1--17,
  2013.

\bibitem{Kulakowski2013hreaCoRR}
K.~Ku{\l}akowski.
\newblock {Heuristic Rating Estimation Approach to The Pairwise Comparisons
  Method}.
\newblock {\em CoRR, (accepted for publication in Fundamenta Informaticae)},
  abs/1309.0386, 2013.

\bibitem{Liberatore2008tahp}
M.~J. Liberatore and R.~L. Nydick.
\newblock {The analytic hierarchy process in medical and health care decision
  making: A literature review}.
\newblock {\em European Journal of Operational Research}, 189(1):14--14, August
  2008.

\bibitem{Mikhailov2003dpff}
L.~Mikhailov.
\newblock {Deriving priorities from fuzzy pairwise comparison judgements}.
\newblock {\em Fuzzy Sets and Systems}, 134(3):365--385, March 2003.

\bibitem{Peterson1998evabt}
G.~L. Peterson and T.~C. Brown.
\newblock {Economic valuation by the method of paired comparison, with emphasis
  on evaluation of the transitivity axiom}.
\newblock {\em Land Economics}, pages 240--261, 1998.

\bibitem{Saaty1977asmf}
T.~L. Saaty.
\newblock A scaling method for priorities in hierarchical structures.
\newblock {\em Journal of Mathematical Psychology}, 15(3):234 -- 281, 1977.

\bibitem{Saaty2005taha}
T.~L. Saaty.
\newblock The analytic hierarchy and analytic network processes for the
  measurement of intangible criteria and for decision-making.
\newblock In {\em Multiple Criteria Decision Analysis: State of the Art
  Surveys}, volume~78 of {\em International Series in Operations Research and
  Management Science}, pages 345--405. Springer New York, 2005.

\bibitem{Saaty2013dk}
T.~L. Saaty.
\newblock {On the Measurement of Intangibles. A Principal Eigenvector Approach
  to Relative Measurement Derived from Paired Comparisons}.
\newblock {\em Notices of the American Mathematical Society}, 60(02):192,
  February 2013.

\bibitem{Smith2004aada}
J.~E. Smith and D.~Von~Winterfeldt.
\newblock {Anniversary article: decision analysis in management science}.
\newblock {\em Management Science}, 50(5):561--574, 2004.

\bibitem{Subramanian2012aroa}
N.~Subramanian and R.~Ramanathan.
\newblock {A review of applications of Analytic Hierarchy Process in operations
  management}.
\newblock {\em International Journal of Production Economics}, 138(2):215--241,
  August 2012.

\bibitem{Thurstone27aloc}
L.~L. Thurstone.
\newblock A law of comparative judgment, reprint of an original work published
  in 1927.
\newblock {\em Psychological Review}, 101:266--270, 1994.

\bibitem{Vaidya2006ahpa}
O.~S. Vaidya and S.~Kumar.
\newblock {Analytic hierarchy process: An overview of applications}.
\newblock {\em European Journal of Operational Research}, 169(1):1--29,
  February 2006.

\bibitem{Yuen2013fcnp}
K.~K.~F. Yuen.
\newblock Fuzzy cognitive network process: Comparison with fuzzy analytic
  hierarchy process in new product development strategy.
\newblock {\em Fuzzy Systems, IEEE Transactions on}, PP(99):1--1, 2013.

\end{thebibliography}

\end{document}